\newcommand*{\addFileDependency}[1]{% argument=file name and extension
  \typeout{(#1)}
  \@addtofilelist{#1}
  \IfFileExists{#1}{}{\typeout{No file #1.}}
}
\xpretocmd{\digraph}{\addFileDependency{#2.dot}}{}{}
\newcommand{\outline}[1]{}
\newcommand{\defin}[1]{\emph{#1}}
\newcommand*{\functionname}[1]{{{\renewcommand{\rmdefault}{ptm}\fontfamily{ppl}\selectfont\textrm{\textup{#1}}}}}
\crefname{theorem}{Theorem}{Theorem}
\crefname{lemma}{Lemma}{Lemma}
\crefname{corollary}{Corollary}{Corollary}
\crefname{algorithm}{Algorithm}{Algorithm}
\crefname{property}{Property}{Property}
\crefname{figure}{Figure}{Figure}
\crefname{table}{Table}{Table}
\crefname{item}{item}{Item}
\newcommand{\UnaryOperator}[2][]{%
    \ensuremath{\mathop{}\mathopen{}#2\mathopen{}\left(#1\right)}%
}
\DeclareMathAlphabet{\mathup}{OT1}{msb}{m}{n}
\newcommand{\Oh}[1]{\UnaryOperator[#1]{\mathcal{O}}}
\newcommand{\oh}[1]{\UnaryOperator[#1]{o}}
\newcommand{\Om}[1]{\UnaryOperator[#1]{\mathup{\Omega}}}
\newcommand{\calS}{{\mathcal{S}}}
\newcommand{\lcp}[1]{\functionname{LCP}\mathopen{}\left(#1\right)}
\newcommand{\odeg}[1]{d_{\rm o}(#1)}
\newcommand{\ideg}[1]{d_{\rm i}(#1)}
\newcommand{\DFA}{{\mathcal{A}}}
\newcommand{\transfn}{{\delta}}
\newcommand{\startstate}{{r}}
\newcommand{\acceptstate}{{F}}
\newcommand{\heavyedges}{{H}}
\newcommand{\lightedges}{{L}}
\newcommand{\bstary}{{\mathcal{B}}}
\newcommand*{\bst}[1]{{\mathcal{B}\mathopen{}\left({#1}\right)}}
\newcommand{\fid}{{\mathcal{D}}}
\newcommand*{\PADFA}{{\mathsf{A}}}
\newcommand{\dataset}[1]{{\textsf{#1}}}
\begin{document}
\title{
Packed Acyclic Deterministic Finite Automata
}
%
% If the paper title is too long for the running head, you can set
% an abbreviated paper title here
%
\author{Hiroki Shibata\inst{1} \and
Masakazu Ishihata\inst{2} \and
Shunsuke Inenaga\inst{3}
}
\authorrunning{Shibata \and Ishihata \and Inenaga}
% First names are abbreviated in the running head.
% If there are more than two authors, 'et al.' is used.
%
\institute{
Joint Graduate School of Mathematics for Innovation, Kyushu University
\and
NTT Communication Science Laboratories,
\and
Department of Informatics, Kyushu University
}
\maketitle% typeset the header of the contribution
\begin{abstract}
An acyclic deterministic finite automaton (ADFA) is a data structure that represents a set of strings (i.e., a dictionary) and facilitates a pattern searching problem of determining whether a given pattern string is present in the dictionary. 
We introduce the \emph{packed ADFA} (PADFA), a compact variant of ADFA, which is designed to achieve more efficient pattern searching by encoding specific paths as packed strings stored in contiguous memory.
We theoretically demonstrate that pattern searching in PADFA is near time-optimal with a small additional overhead and becomes fully time-optimal for sufficiently long patterns.
Moreover, we prove that a PADFA requires fewer bits than a trie when the dictionary size is relatively smaller than the number of states in the PADFA.
Lastly, we empirically show that PADFAs improve both the space and time efficiency of pattern searching on real-world datasets.

\keywords{pattern searching \and acyclic DFA \and packed string}
\end{abstract}

%%%%%%%%%%%%%%%%%%%%%%%%%%%%%%%%%%%%%%%%
% Intro
%%%%%%%%%%%%%%%%%%%%%%%%%%%%%%%%%%%%%%%%
\section{Introduction}

%
% Pattern Searching problem and ADFA
% 
Text indexing is a central problem in string processing with many real-world applications, including information retrieval, natural language processing, and bioinformatics.
In this paper, we focus on the \emph{pattern searching problem}, which involves preprocessing a given \emph{dictionary} (a set of distinct strings) into an indexing structure and checking whether a pattern string is contained within the dictionary. 
An \emph{acyclic deterministic finite automaton} (ADFA)~\cite{DBLP:journals/cacm/AppelJ88,DBLP:journals/coling/DaciukMWW00}, also known as a \emph{deterministic acyclic finite state automaton} (DAFSA) or a \emph{directed acyclic word graph} (DAWG), is a fundamental indexing structure for pattern searching.
A \defin{trie}~\cite{DBLP:books/aw/Knuth73} for a dictionary $\calS$ of $k$ strings is an ADFA that forms a rooted tree, and the $k$ paths from its root to an accepting state correspond to strings in $\calS$.
A minimal ADFA (minADFA) for the same $\calS$ can be obtained by merging all isomorphic subtrees of the trie. 
Given a pattern string $P$ of length $m$, both the trie and the minADFA allow us to determine $P \in \calS$ in $\Oh{m \log \sigma}$ time, where $\sigma$ represents the alphabet size, while the minADFA is smaller than the trie.
However, in practice, tries are considered to operate more efficiently than minADFAs.
This is because a trie has a simple tree structure, whereas the minADFA is a directed acyclic graph (DAG), which introduces additional processing overhead.
Therefore, determining which structure offers superior memory efficiency in practical usage remains an open question.

% 
% Packed sting and index structures
% 
Modern computers can process a single word of data in one operation. 
Therefore, when the alphabet size $\sigma$ is sufficiently smaller than the length of a machine word, strings can be processed more efficiently by packing multiple characters into a single machine word, allowing them to be processed in a single operation. 
Typically, a string stores each character in a separate machine word, whereas a packed string stores characters in consecutive memory locations. 
This packing technique enhances memory efficiency, enabling a more compact representation of strings both in terms of storage and processing.
Comparing two packed strings is $\alpha$ times faster than comparing ordinary strings, where $\alpha$ is the number of characters packed into a single word. 
Consequently, the optimal time complexity for pattern searching using packed strings is $\Oh{m/\alpha}$.
Several studies have investigated accelerating pattern searching in tries using packed strings~\cite{DBLP:journals/ieicet/TakagiISA17,DBLP:journals/iandc/TsurutaKKNIBT22};
however, no previous work has applied this technique to DFAs with a theoretical evaluation. 
This research gap arises primarily due to the structural incompatibility between DAG structures and packing techniques.

%
% This paper
%
In this paper, we introduce a \emph{packed ADFA} (PADFA), the first approach to apply the packing technique to ADFA.
Given an ADFA $\DFA$ accepting $\calS$, our proposed method extracts some specific paths, called \emph{heavy paths}, from $\DFA$ by \emph{symmetric centroid path decomposition} (SymCPD).
Then, the method compiles them into a single packed string and the remaining edges as \emph{biased search tree} (BST).
Using the obtained packed string and BST, the method performs the pattern searching in $\Oh{m/\alpha + \log k}$ time.
We theoretically show that a PDFA for any ADFA achieves the time-optimal pattern searching, i.e., $\Oh{m/\alpha}$, if $m$ is sufficiently long compared to $k$.
Additionally, we demonstrate that a PDFA for any minADFA consumes fewer bit of memory than a trie if it has a sufficiently large number of states compared to $k$.
PADFAs are useful not only for pattern searching but also for pattern matching, a task finding a pattern string from a text string.
A DAWG, an ADFA representing all suffixes of the text, is an indexing structure for pattern matching.
Therefore, we can obtain packed DAWG in the same manner as a general ADFA.
The packed DAWG provides the time-optimal pattern matching when $m$ is sufficiently long.
We also conducted experiments with real-world datasets.
The empirical results indicate that PADFA improves both space and time efficiencies of pattern searching.

%%%%%%%%%%%%%%%%%%%%%%%%%%%%%%%%%%%%%%%%
% Contribution
%%%%%%%%%%%%%%%%%%%%%%%%%%%%%%%%%%%%%%%%
\subsection{Contribution}

%
% Pattern Searching
%
We here organize our theoretical contributions.
Let $\calS$ be a set of $k$ distinct strings with the alphabet size $\sigma$, and $\PADFA$ be the packed ADFA, our proposed indexing structure, obtained from an ADFA representing $\calS$ with $n$ states.
We assume the word RAM model with machine words of length $\omega$ and define $\alpha \triangleq \omega / \lceil \log_2 \sigma \rceil$.
Then, the pattern searching problem in $\PADFA$ for a given pattern string $P$ of length $m$ is solved in the following time and space complexity.
\begin{restatable}{theorem}{MainTheoremTime}
\label{thm:PADFA:Time}
The pattern searching in a PADFA for any ADFA takes $\Oh{m/\alpha + \log k}$ time.
\end{restatable}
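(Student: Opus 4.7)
The plan is to decompose the time cost of a single pattern search into two categories of operations and bound each separately. A search starting from the root of $\PADFA$ alternates between scanning along a heavy path (where successive characters of $P$ are matched against a packed string) and, at each branching state, consulting a biased search tree (BST) to determine which light edge to follow next. The search terminates either when all of $P$ has been consumed---in which case we check whether the reached state is accepting---or when a mismatch is detected, and the total running time is the sum of the costs of these two kinds of steps.

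First I would analyze the cost contributed by heavy-path traversals. Because heavy paths are stored as packed strings with $\alpha$ characters per machine word, comparing a block of $\ell$ consecutive pattern characters against a substring of such a packed string can be done in $\Oh{1 + \ell/\alpha}$ time by reading $\lceil \ell/\alpha \rceil$ aligned words and using word-level equality, with appropriate masking near the endpoints to handle misaligned starts and partial tails. Summed over all heavy-path segments visited during the search, the total number of characters compared is at most $m$, so the aggregate cost is $\Oh{h + m/\alpha}$, where $h$ is the number of heavy-path segments entered. By the balancing property of SymCPD, $h = \Oh{\log k}$ along any root-to-sink walk, contributing $\Oh{m/\alpha + \log k}$ in total.

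Next I would bound the total BST cost. At each branching state $v$ traversed, let $W_v$ denote the number of strings in $\calS$ whose accepting paths pass through $v$ and let $w_v$ denote the same quantity restricted to the child that is followed. A biased search tree keyed by these weights locates the correct light edge in $\Oh{1 + \log(W_v/w_v)}$ time. The crucial observation is that the weights form a monotone non-increasing sequence along the traversal---$w_v$ at one branching state equals $W_{v'}$ at the next one---so the logarithmic terms telescope to $\Oh{\log(k/w_{\text{final}})} = \Oh{\log k}$, while the $\Oh{1}$ additive terms across $\Oh{\log k}$ branching states contribute another $\Oh{\log k}$. Adding the two contributions yields the claimed bound $\Oh{m/\alpha + \log k}$.

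The main obstacle will be making the packed-string comparison step airtight when a heavy path is entered at a nonzero word offset and possibly exhausted in the middle of a word: one must argue that $\Oh{1 + \ell/\alpha}$ truly holds per entry with a clean additive constant independent of the bit alignment, otherwise the $h$ entries could inflate the bound beyond $\Oh{m/\alpha + \log k}$. A secondary subtlety is confirming that SymCPD applied to an ADFA (rather than a rooted tree) still guarantees $\Oh{\log k}$ heavy-path entries on any root-to-sink walk, which relies on the chosen heavy successor absorbing at least half of the path weight at each branching so that the residual weight halves across every light edge.
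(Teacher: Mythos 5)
Your proposal is correct and follows essentially the same route as the paper's proof: it bounds the number of heavy/light alternations by $\Oh{\log k}$ via the SymCPD light-edge property, charges the packed LCP scans at $\Oh{1 + l_i/\alpha}$ each so they sum to $\Oh{m/\alpha + \log k}$, and telescopes the biased-search-tree access costs using the path-count weights $\pi(\cdot, W)$ to get $\Oh{\log k}$. The only cosmetic differences are that the paper uses $\pi(v,W)$ (number of accepted suffixes from $v$) rather than ``strings passing through $v$'' as the BST weight, and the weights along the traversal satisfy $\pi(v_{i+1},W)\le\pi(u_i,W)$ rather than equality, which is exactly what the telescoping needs.
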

\begin{restatable}{theorem}{MainTheoremSpace}
\label{thm:PADFA:Space}
The space consumption of a PADFA for any minADFA is $n (1 + \lceil \log_2 \sigma \rceil) + \Oh{k(\log n + \log \sigma)} + \oh{n}$ bits.
\end{restatable}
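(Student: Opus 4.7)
The plan is to decompose the space of $\PADFA$ into three components, bound each separately, and sum to match the claimed bound. The three components are: (i) the packed string $S$ that concatenates the labels along every heavy path of the SymCPD, (ii) a bit vector $B$ of length $n$ equipped with constant-time rank/select support, used to mark heavy-path boundaries and accepting-state flags, and (iii) the BST that stores every light edge together with its label and its target pointer.

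First, I would handle components (i) and (ii), which should be essentially direct. Since the heavy paths produced by SymCPD are vertex-disjoint and together cover all $n$ states of the minADFA, the total number of edge labels concatenated into $S$ is at most $n$, so $S$ consumes at most $n \lceil \log_2 \sigma \rceil$ bits. The bit vector $B$ has length exactly $n$, and standard succinct rank/select dictionaries add only $\oh{n}$ bits while allowing constant-time navigation. These two pieces together account for $n(1 + \lceil \log_2 \sigma \rceil) + \oh{n}$.

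The main technical step is bounding the size of the BST in (iii). I would establish the combinatorial claim that the number of light edges in the SymCPD of any minADFA accepting $k$ strings is $\Oh{k}$. The analogue for the ordinary heavy-light decomposition of a tree with $L$ leaves yields exactly $L - 1$ light edges; I would lift this bound to the DAG setting via a charging argument, mapping each light edge to a distinct accepting state by following a canonical sequence of heavy edges from its endpoint down to a sink, and then invoking the fact that a minADFA accepting $k$ strings has at most $k$ accepting states. Given this, each of the $\Oh{k}$ light edges can be stored with $\lceil \log_2 \sigma \rceil$ bits for its label and $\Oh{\log n}$ bits for its target pointer, plus standard BST structural overhead, yielding $\Oh{k(\log n + \log \sigma)}$ bits. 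Summing the three components gives exactly the claimed bound.

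The main obstacle is proving the $\Oh{k}$ bound on the number of light edges of the SymCPD of a minADFA. Unlike the tree case, SymCPD in a DAG must designate a heavy incoming edge as well as a heavy outgoing edge at each state, so the charging argument has to simultaneously account for light edges arising from branching states and from merging states, and must land at distinct accepting states in both cases. Making the charge-to-accepting-state map injective will likely rely on minimality of the ADFA (which controls how isomorphic subtrees are collapsed) together with a carefully chosen traversal rule that distinguishes the endpoints of different light edges along their respective heavy-path tails.
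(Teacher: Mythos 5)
Your overall accounting of the PADFA components (the packed string with $n\lceil\log_2\sigma\rceil$ bits, a length-$n$ bit vector with rank/select support adding $n+\oh{n}$ bits, and $\Oh{\log n+\log\sigma}$ bits per light edge) matches the paper's proof, so everything hinges on your key combinatorial claim that the SymCPD of a minADFA has only $\Oh{k}$ light edges --- and there your proposed argument has a genuine gap. You plan to map each light edge injectively to an accepting state by following heavy edges from its endpoint down to a sink. In this paper's setting every dictionary string ends with $\$$, so the minADFA has exactly one sink, which is its unique accepting state; your map would send every light edge to that same state and cannot be injective. Even dropping the $\$$ convention, the number of accepting states is at most $k$, while the number of light edges can exceed $k$ (for example, for $\calS=\{\rm a\$,\, b\$,\, acd\$,\, bcd\$\}$ the SymCPD of the minADFA has five light edges and $k=4$; the paper itself only proves $|L|\le 4k$), so no injection into accepting states can exist in general. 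You flag this injectivity problem as the ``main obstacle'' but give no mechanism that resolves it, and switching the target to accepted strings does not immediately help either, since a single source-to-sink path can carry up to $2\lfloor\log_2 k\rfloor$ light edges, so a naive per-path charge loses a $\log k$ factor.

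The paper avoids charging altogether and uses a degree-counting argument. Its degree lemma shows that in a minADFA the total out-degree of all vertices with out-degree at least $2$, and the total in-degree of all vertices with in-degree at least $2$, are each below $2k$: this is first verified on the trie by a counting identity ($d-n_2=k-1$ combined with $d\ge 2n_2$), and then transferred to the minADFA because the latter arises from the trie by merging isomorphic subtrees. Its light-edge lemma then observes that, by the definition of SymCPD, an edge whose source has out-degree $1$ and whose target has in-degree $1$ is necessarily heavy (both $\pi(r,\cdot)$ and $\pi(\cdot,W)$ are unchanged along it, so the two $\lambda$-values coincide); hence every light edge is counted in one of the two degree sums, giving $|L|\le 4k$. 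To repair your proof you would need either this kind of counting argument or a charging scheme whose target set has size $\Theta(k)$ with each target charged $\Oh{1}$ times; accepting states are the wrong target.
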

\noindent 
Thus, $\PADFA$ has an additive overhead of $\log k$ for the time-optimal pattern search $\Oh{m/\alpha}$ in general.
However, by introducing an assumption that $m$ is sufficiently long, the time complexity can be improved as follows.
\begin{restatable}{corollary}{TimeOptimalSearch} 
\label{cor:optimal_time}
Given PADFA $\PADFA$ for any ADFA and any query pattern of length $m \in \Omega(\alpha \log k)$, the pattern searching in $\PADFA$ takes $\Oh{m / \alpha}$ time, which is optimal for pattern searching with packed strings.
\end{restatable}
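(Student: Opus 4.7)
The plan is to derive the corollary directly from \Cref{thm:PADFA:Time} by algebraic simplification under the length assumption, and then justify the optimality claim by a standard information-theoretic / input-reading lower bound for the packed-string model.

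First, I would invoke \Cref{thm:PADFA:Time} to get the running-time bound $\Oh{m/\alpha + \log k}$ unconditionally. Then I would exploit the hypothesis $m \in \Om{\alpha \log k}$: this means there exists a constant $c > 0$ and a threshold beyond which $m \geq c\,\alpha \log k$, equivalently $\log k \leq (1/c)\,(m/\alpha)$. Substituting this into the additive overhead absorbs the $\log k$ term into $\Oh{m/\alpha}$, yielding the claimed bound $\Oh{m/\alpha}$. This is a one-line simplification, so no subtlety is expected in this step.

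For the optimality statement, I would argue that any algorithm solving the pattern searching problem must at minimum inspect the pattern $P$. In the word RAM model with machine words of length $\omega$, a packed pattern of length $m$ over an alphabet of size $\sigma$ occupies $\lceil m \lceil \log_2 \sigma \rceil / \omega \rceil = \Ot{m/\alpha}$ words of input. Since every word must be read at least once to distinguish $P$ from patterns that differ only in the last unread word, pattern searching requires $\Om{m/\alpha}$ time in this model. Matching this lower bound with the upper bound from the previous paragraph gives the asserted optimality.

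The only mildly delicate point is the optimality clause: one has to state the lower bound with respect to the packed-string model explicitly, since in a non-packed model the lower bound would instead be $\Om{m}$. I would be careful to phrase this as ``optimal among algorithms that read the packed input,'' which is the standard convention in the packed-string literature already adopted in the introduction. Beyond that, the proof is essentially a corollary in the literal sense, with no new technical ingredient required.
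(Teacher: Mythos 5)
Your proposal is correct and follows essentially the same route as the paper, which presents this corollary as an immediate consequence of \cref{thm:PADFA:Time}: the hypothesis $m \in \Om{\alpha \log k}$ absorbs the additive $\log k$ into $\Oh{m/\alpha}$, and optimality is the standard $\Om{m/\alpha}$ input-reading bound for packed strings already invoked in the introduction. Your explicit statement of the lower bound is slightly more detailed than the paper's (which treats it as a known convention), but it introduces no new ingredient.
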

\noindent
Additionally, we can improve the space complexity by assuming an appropriate condition that $k$ is relatively smaller than $n$ as follows.
\begin{restatable}{corollary}{SpaceWhenKisSmall} 
\label{cor:space_when_k_small}
Given PADFA $\PADFA$ of any minADFA satisfying $\max\{k \log n, k \log \sigma\} \in \oh{n}$, $\PADFA$ consumes $n (1 + \lceil \log_2 \sigma \rceil) + \oh{n}$ bits of space.
\end{restatable}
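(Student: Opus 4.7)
The plan is to obtain the corollary directly from Theorem \ref{thm:PADFA:Space}, since that theorem already gives the PADFA space as $n(1 + \lceil \log_2 \sigma \rceil) + \Oh{k(\log n + \log \sigma)} + \oh{n}$ bits for any minADFA. Consequently, the task reduces to showing that, under the hypothesis $\max\{k \log n, k \log \sigma\} \in \oh{n}$, the overhead term $\Oh{k(\log n + \log \sigma)}$ is absorbed into $\oh{n}$ and the leading coefficient remains unchanged.

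First I would apply the trivial inequality $a + b \le 2\max\{a,b\}$ with $a = k\log n$ and $b = k\log\sigma$, giving $k(\log n + \log \sigma) \le 2 \max\{k\log n, k\log\sigma\}$. Invoking the hypothesis, the right-hand side is $\oh{n}$, so $k(\log n + \log \sigma) \in \oh{n}$ and therefore $\Oh{k(\log n + \log \sigma)} \subseteq \oh{n}$. Substituting this back into the bound from Theorem \ref{thm:PADFA:Space} and merging the two $\oh{n}$ summands yields $n(1 + \lceil \log_2 \sigma \rceil) + \oh{n}$ bits, matching the claimed bound.

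There is essentially no obstacle to overcome here; the corollary is a direct asymptotic simplification of Theorem \ref{thm:PADFA:Space} in the regime where the dictionary size $k$ is dominated by the number of states $n$. The only conceptual point worth flagging is the shape of the hypothesis: the overhead in Theorem \ref{thm:PADFA:Space} arises from two independent additive contributions, roughly $\Oh{k \log n}$ from state/pointer information attached to the $k$ accepting paths and $\Oh{k \log \sigma}$ from character-level information, and both must be suppressed simultaneously, which is exactly what the $\max$ in the hypothesis ensures. Hence no separate case analysis is needed, and the proof is a one-line application of the theorem.
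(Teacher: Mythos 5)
Your proposal is correct and matches the paper, which states this corollary as an immediate consequence of \cref{thm:PADFA:Space} without a separate proof: under the hypothesis the $\Oh{k(\log n + \log \sigma)}$ overhead is absorbed into $\oh{n}$, exactly as you argue.
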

\noindent
In other words, $\PADFA$ consumes fewer bits of memory than the trie because 
a trie of $n$ vertices requires almost $n (2 + \log_2 \sigma)$ bits, which is at most $n(1 + \lceil \log_2 \sigma \rceil)$ bits.
We believe that the above two conditions are sufficiently realistic. 
This is because situations where more efficient pattern searching is desired typically involve handling large dictionaries and long patterns, and in such cases, we expect that the conditions will generally be met.
Thus, the above theoretical results highlight the advantage of PADFAs over tries.

%
% Pattern matching
%
PADFAs are useful not only for pattern searching but also for substring pattern matching, determining whether a pattern string $P$ of length $m$ occurs in a text string $T$ of length $n$.
It is known that a DAWG of $T$, which is an ADFA representing all $k = n+1$ suffixes of $T$ (including the empty suffix), has only $\Oh{n}$ vertices and edges.
We can construct packed DAWG in the same manner as general ADFAs and derive the following corollary.
\begin{restatable}{corollary}{PackedDAWG} 
\label{cor:packed_dawg}
For any string $T$ of length $n$ and any pattern string $P$ of length $m$, the packed DAWG for $T$ consumes $\Oh{n (\log n + \log \sigma)}$ bits of space and allows substring pattern matching in $\Oh{m / \alpha + \log n}$ time.
\end{restatable}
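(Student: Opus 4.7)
The plan is to apply \cref{thm:PADFA:Time} and \cref{thm:PADFA:Space} to the DAWG of the text $T$, viewed as an ADFA whose accepted language consists of the $k = n+1$ suffixes of $T$. First, I would recall the standard facts that the DAWG of $T$ is a minADFA with $\Oh{n}$ states and $\Oh{n}$ edges, and that substring pattern matching reduces to a traversal from the source: a pattern $P$ occurs in $T$ if and only if there is a path from the root of the DAWG whose edge labels spell $P$. Equivalently, one may mark every DAWG state as accepting to obtain an ADFA that recognises all substrings of $T$; this view preserves minimality and is therefore compatible with the PADFA construction described earlier.

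Next, I would instantiate \cref{thm:PADFA:Space} with $n_D \in \Oh{n}$ states and $k = n+1$ accepted suffixes. The resulting bound
\[
n_D (1 + \lceil \log_2 \sigma \rceil) + \Oh{k(\log n_D + \log \sigma)} + \oh{n_D}
\]
collapses to $\Oh{n(\log n + \log \sigma)}$ bits, since the $n_D \lceil \log_2 \sigma \rceil$ term is $\Oh{n \log \sigma}$ and is absorbed by the middle term. For the query time, I would invoke \cref{thm:PADFA:Time} with $k = n+1$ to obtain $\Oh{m/\alpha + \log(n+1)} = \Oh{m/\alpha + \log n}$.

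The main obstacle is bookkeeping rather than a new idea. One must justify that substring matching can be decided by exactly the same PADFA traversal primitive that \cref{thm:PADFA:Time} uses for pattern searching, so that the running time bound transfers verbatim; this follows because the PADFA merely encodes the edge-labelled transition function of the underlying ADFA, and after marking every state accepting the two decision problems coincide. One must also check that using the canonical construction of the DAWG yields a minADFA, so that \cref{thm:PADFA:Space} (which requires the input to be minimal) applies, and that the standard $\Oh{n}$ bound on the number of its states and transitions is tight enough to be absorbed by the $\Oh{n(\log n + \log \sigma)}$ expression without contributing spurious logarithmic factors.
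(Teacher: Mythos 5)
Your proposal is correct and follows essentially the same route as the paper: view the DAWG as an ADFA accepting the $k=n+1$ suffixes of $T$ with $\Oh{n}$ states and edges, and instantiate \cref{thm:PADFA:Time} and \cref{thm:PADFA:Space} directly. Your extra bookkeeping (matching semantics via all-states-accepting and the minimality check for the space bound) is just a more explicit version of what the paper leaves implicit.
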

\noindent
Moreover, using \cref{cor:optimal_time}, we can say that the packed DAWG achieves the optimal time complexity $\Oh{m / \alpha}$ when $m$ is sufficiently long.

%%%%%%%%%%%%%%%%%%%%%%%%%%%%%%%%%%%%%%%%
% Related work
%%%%%%%%%%%%%%%%%%%%%%%%%%%%%%%%%%%%%%%%
\subsection{Related work}

%
% Trie optimization for unary path
% 
In the context of text indexing using graph structures, treating specific paths as strings stored in contiguous memory is a key technique for achieving efficient pattern searching.
\emph{Compact tries} (also known as \emph{patricia tries})~\cite{DBLP:journals/jacm/Morrison68} and \emph{minimal prefix tries}~\cite{DBLP:journals/tse/Aoe89a} handle unary paths in a trie as single edges labeled by strings
These techniques reduce memory consumption and enhance search efficiency, and they can also be applied to ADFA~\cite{fujitaIKMF16_doublearrayCDAWG}.

%
% HPD for trees
%
The \emph{heavy path decomposition} (also known as \emph{centroid path decomposition})~\cite{DBLP:journals/jcss/SleatorT83} is a powerful tool for performing efficient queries on trees by extracting specific paths from the tree.
It handles more edges within packed strings than other methods that optimize only unary paths, while it offers theoretical guarantees on query time.
Some studies have adapted this technique to achieve efficient tries~\cite{DBLP:conf/pods/FerraginaGGSV08,DBLP:journals/jea/KandaKTMF20}.

%
% HPD for DAGs
%
Recently, several variants of heavy path decomposition for general DAGs have been proposed~\cite{DBLP:journals/siamcomp/BilleLRSSW15,DBLP:journals/jacm/GanardiJL21}.
This technique has been applied to various areas, including
efficient random access for grammars~\cite{DBLP:journals/siamcomp/BilleLRSSW15}, and
accelerating query processing for \emph{compacted directed acyclic word graphs} (CDAWGs)~\cite{DBLP:conf/cpm/BilleGS17}.
However, no research has yet applied heavy path decomposition to ADFAs or utilized it to accelerate pattern searching.

%%%%%%%%%%%%%%%%%%%%%%%%%%%%%%%%%%%%%%%%%%%%%%%%%%%%%%%%%%%%%%%%%%%%%%%%%%%%%%%%
% Preliminaries
%%%%%%%%%%%%%%%%%%%%%%%%%%%%%%%%%%%%%%%%%%%%%%%%%%%%%%%%%%%%%%%%%%%%%%%%%%%%%%%%
\section{Preliminaries}

We start by defining the pattern searching problem that this paper addresses. 
After that, we describe the definition of ADFAs, and finally, we present several techniques that serve as key components of the proposed PADFA.

%%%%%%%%%%%%%%%%%%%%%%%%%%%%%%%%%%%%%%%%
% Pattern searching problem on the word RAM model
%%%%%%%%%%%%%%%%%%%%%%%%%%%%%%%%%%%%%%%%
\subsection{The pattern searching problem in the word RAM model}

%
% definitions
%
An alphabet $\Sigma$ is an ordered set of $\sigma$ distinct characters.
A \emph{string} is a finite sequence of characters drawn from $\Sigma$.
For any string $S$, $|S|$ represents its length, and $S[i]$ denotes its $i$th character, where $1 \leq i \leq |S|$.
The empty string, which has a length of zero, is denoted by $\varepsilon$, i.e., $|\varepsilon| = 0$.
Additionally, let $\$$ and $\#$ be special characters:
$\$$ can appear only at the end of non-empty strings, indicating the end of a string, while $\#$ never appears in the input strings and is used solely for the purpose of our algorithm.
For string $S = xyz$, the strings $x$, $y$, and $z$ are referred to as a \emph{prefix}, \emph{substring}, and \emph{suffix} of $S$, respectively.
For $1 \leq i \leq j \leq |S|$, 
$S[i..j] \triangleq S[i] \cdots S[j]$ denotes the substring of $S$ that starts at position $i$ and ends at position $j$.
For convenience, we define $S[i..j] \triangleq \varepsilon$ if $j < i$.
The length of the \emph{longest common prefix} (LCP) of two strings $S$ and $T$
is denoted by $\lcp{S,T} \triangleq \max(\{0\} \cup \{i \mid S[1..i] = T[1..i]\})$.

%
% dictionary and problem (pattern searching)
%
Let $\calS \triangleq \{ S_1, \dots, S_k \}$ be a set of $k$ distinct strings, called a \defin{dictionary}.
The \defin{pattern searching problem} involves evaluating whether 
a given pattern string $P$ is an element of the dictionary $\calS$, i.e., $P \in \calS$.
We consider the indexing version of the above problem.
In other words, $\calS$ is given in advance, and the goal is to preprocess $\calS$ into an appropriate indexing structure for efficiently executing the pattern searching queries for various pattern strings provided online.

%
% word-RAM model
%
In this paper, we focus on the pattern searching problem in the \defin{word RAM model}~\cite{DBLP:journals/jcss/FredmanW93} with a machine word length of $\omega$.
The model provides several constant-time instructions, including random access, word-wise logical and arithmetic operations, and word-wise comparison, which returns the position of the first miss-matched bit.
Let $\ell \triangleq \max_{S\in\calS} |S|$ be the length of the longest string in the dictionary $\calS$.
We assume that $\sigma, k, \ell < 2^\omega$.
Define $\alpha \triangleq \omega / \lceil \log_2 \sigma \rceil$.
Each character is then represented by $\lceil \log_2 \sigma \rceil$ bits, and a string of length $n$ is represented using $n \lceil \log_2 \sigma \rceil$ bits, stored in $\lceil n / \alpha \rceil$ contiguous words.
Strings stored in contiguous memory are called \defin{packed strings}.
Using the constant-time word-wise comparison of the word RAM model, we can obtain $\lcp{X, Y}$ for any packed strings $X$ and $Y$ in $\Oh{|\lcp{X, Y}| / \alpha}$ time.

%%%%%%%%%%%%%%%%%%%%%%%%%%%%%%%%%%%%%%%%
% Acyclic DFAs
%%%%%%%%%%%%%%%%%%%%%%%%%%%%%%%%%%%%%%%%
\subsection{Acyclic deterministic finite automatons (ADFAs)}

%
% algorithm
%
\begin{algorithm}[tb]
    \caption{Determining whether $P \in \calS$ in $\DFA$ accepting $\calS$}
    \label{alg:ADFA-search}
    \begin{algorithmic}[1]
        \State $v_0 \gets r$
        \For{$i$ in $1, \dots, m$}
            \State $v_i \gets \transfn(v_{i-1}, P[i])$ \Comment{move along one edge}
            \State \Return {\bf false} {\bf if} $v_i = \bot$
        \EndFor
        \State \Return $v_m \in \acceptstate$
    \end{algorithmic}
\end{algorithm}

\begin{figure}[tb]
\centering
\includegraphics[width=0.8\textwidth]{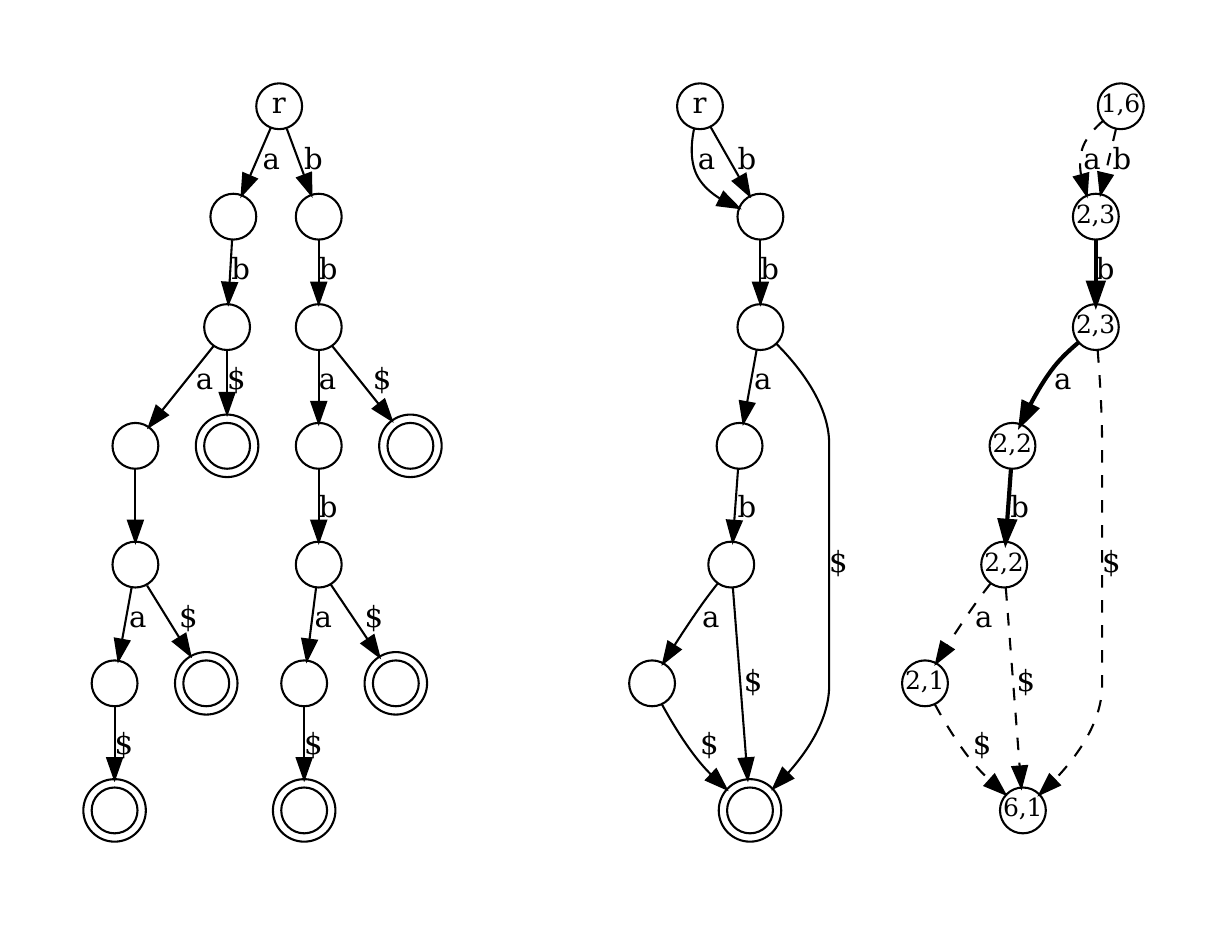} 
\caption{ 
The trie (left), the minADFA (center), and its SymCDP (right), for a dictionary
$\calS = \{ \rm ab\$, abab\$, ababa\$, bb\$, bbab\$, bbaba\$ \}$.
For the trie and minADFA, a vertex labeled $r$ represents a start state, and double circles represent accepting states.
For SymCPD, bold and dashed edges represent heavy and light edges, respectively.
Each vertex is labeled by the two integers indicating $\pi(r, v)$ and $\pi(v, W)$.
}
\label{fig:trie_ADFA_SymCPD}
\end{figure}

%
% FA/ADFA/ADFA
%
A finite automaton (FA) is a tuple $\DFA \triangleq \langle V, E, \acceptstate, \startstate \rangle$, where $V$ is a set of vertices (states), $E \subseteq V \times V \times \Sigma$ is a set of directed labeled edges (transitions), $\acceptstate \subseteq V$ is the set of accepting states, and $\startstate \in V$ is the initial state.
An FA is \emph{deterministic} if the out-edges of the same vertex are labeled with distinct characters.
A deterministic FA (DFA) is \emph{complete} if each vertex has the out-edge labeled by each character from $\Sigma$, and is \emph{partial} otherwise~\cite{DBLP:journals/tcs/BlumerBHECS85}. 
A complete DFA may contain a vertex with no directed path to any accepting state, and an equivalent partial DFA can be obtained by removing such redundant states.
For any partial DFA $\DFA$, let $\transfn : V \times \Sigma \rightarrow V \cup \{ \bot \}$ be its transition function such that $\transfn(u, c) \triangleq v$ if $(u, v, c) \in E$ and $\transfn(u, c) \triangleq \bot$ otherwise; namely, $\DFA$ immediately halts when it reaches $\bot$.
Throughout this paper, we assume that any given DFA is partial and contains no redundant state.
For any $v \in V$, let $\ideg{v}$ and $\odeg{u}$ be the in and out degree of $v$, respectively.
$v$ is \emph{source} if $\ideg{v} = 0$, and $u$ is \emph{sink} if $\odeg{u} = 0$.
A DFA is \emph{acyclic} if it has no cycles.
For any $U \subseteq V$, define $\ideg{U} = \sum_{u \in U} \ideg{u}$ and $\odeg{U} = \sum_{u \in U} \odeg{u}$.
Without loss of generality, we assume that an acyclic DFA (ADFA) has a unique source, which is the initial state $\startstate$.

%
% ADFA
%
An ADFA $\DFA$ \defin{accepts} a string $P$ of legnth $m$ iff there exists a sequence of states $(v_0, \dots, v_m)$ such that $v_0 = \startstate$, $v_m \in \acceptstate$, and $\transfn(v_{i-1}, P[i]) = v_i$ for all $1 \leq i \leq m$. 
Let $\calS$ be the set of all strings accepted by $\DFA$, and then, $\calS$ is finite since $\DFA$ is acyclic and $|V|$ is finite.
\cref{alg:ADFA-search} shows the pattern searching proccess in $\DFA$ and takes $O(m \log \sigma)$ time, independently of $|V|$.
A \defin{trie} for $\calS$ is a tree-formed ADFA accepting $\calS$, where its accepting states correspond to its leaves when every string in $\calS$ ends with $\$$.
Let $\DFA$ be the ADFA obtained by merging isomorphic subtrees of the trie.
Then, $\DFA$ is minimal and has exactly one accepting state, which is its unique sink.
\cref{fig:trie_ADFA_SymCPD} illustrates an example of a trie and its corresponding minimal ADFA (minADFA).

%%%%%%%%%%%%%%%%%%%%%%%%%%%%%%%%%%%%%%%%
% SymCPD
%%%%%%%%%%%%%%%%%%%%%%%%%%%%%%%%%%%%%%%%
\subsection{Building blocks of PADFAs}

%
% abstract
%
We here introduce three techniques,
a \emph{symmetric centroid path decomposition} (SymCPD)~\cite{DBLP:journals/jacm/GanardiJL21},
a \emph{biased search tree} (BST)~\cite{DBLP:journals/siamcomp/BentST85},
and a \emph{fully indexable dictionrry} (FID)~\cite{DBLP:journals/talg/RamanRS07}, 
that are employed to implement our PADFA.

%
% SymCPD
%
SymCPD is a technique for decomposing a DAG into disjoint paths, serving as a generalization of the well-known \defin{heavy path decomposition}~\cite{DBLP:journals/jcss/SleatorT83} used for a tree.
Consider a DAG with a vertex set $V$ and a labeled edge set $E$, assuming a unique source $r \in V$ and a set of sinks $W \subseteq V$.
For any $u, v \in V$, let $\pi(v, u)$ denote the number of directed paths from $v$ to $u$, where $\pi(v, v) \triangleq 1$.
For any $U \subseteq V$, define $\pi(v, U) \triangleq \sum_{u \in U} \pi(v, u)$.
Additionaly, let $\lambda(v) \triangleq \left( \lfloor \log_2 \pi(r, v) \rfloor, \lfloor \log_2 \pi(v, W) \rfloor \right)$.
SymCPD then divides $E$ into two disjoint sets $\heavyedges$ and $\lightedges$, where $\heavyedges \triangleq \{ (u, v, c) \in E \mid \lambda(u) = \lambda(v) \}$ is the set of \defin{heavy edges}, and $\lightedges \triangleq E \setminus \heavyedges$ is the set of \defin{light edges}.
Using these sets, $H$ and $L$, we can derive the following useful properties~\cite[Lemma 2.1]{DBLP:journals/jacm/GanardiJL21}.
\begin{property}
The edge-induced subgraph $\langle V, \heavyedges \rangle$ forms a set of disjoint paths.
\end{property}
\begin{property} \label{prop:SymCPD_lightedge}
Any path on the DAG contains at most $2 \log_2 \lfloor \pi(r, W) \rfloor$ light edges.
\end{property}
The right figure of \cref{fig:trie_ADFA_SymCPD} gives an example of SymCPD.

%
% BST
%
A BST is a data structure storing a subset of an ordered set and various operations, including access, insert, delete, and more.
Consider an ordered universe $\Sigma = \{c_1, \dots, c_\sigma\}$ of $\sigma$ items with the total order $c_1 < \dots < c_{\sigma}$.
Let $w : \Sigma \rightarrow \mathbb{N}_+$ be a weight function over $\Sigma$, and define $w(C) =  \sum_{c \in C} w(c)$ for any $C \subseteq \Sigma$.
We denote by $\bst{C, w}$ a BST storing $C$.
The access operation in $\bst{C, w}$ with a query item $c$, denoted by ${\rm access}(\bst{C, w}, c)$, returns the position index of $c$ if $c \in C$, and NULL otherwise.
$\bst{C, w}$ is implemented as a biased binary search tree, with the following space complexity and time complexity for an access operation.
\begin{property} \label{pro:bst_space}
A succinct representation of $\bst{C, w}$ consumes $(2 + \lfloor \log \sigma \rfloor) |C| + \oh{|C|}$ bits~\cite{DBLP:journals/talg/NavarroS14}. 
\end{property}
\begin{property} \label{pro:bst_time}
An access operation ${\rm access}(\bst{C, w}, c)$ takes $\Oh{1 + \log \left( w(C) / w(c) \right) }$ time if $c \in C$, and $\Oh{\log w(C)}$ time otherwise~\cite{DBLP:journals/siamcomp/BentST85}. 
\end{property}
The factor of 1 that appears in the above time complexity is introduced to avoid the case where $w(C) = w(c)$ results in $\log \left( w(C) / w(c) \right) = 0$, which occurs when $|C| = 1$.
In general, an access operation in a balanced binary search tree requires $\Oh{\log |C|}$ time.
Consequently, by introducing an appropriate weight function $w$, a BST can reduce the average access time for query items provided in an online manner.

%
% FID
% 
A FID is a succinct data structure representing a bit string, a sequence of 0s and 1s, providing the rank and select operations.
Let $\fid$ be a FID representing a bit string $B$ of length $n$. 
Given any $i \leq n$, a rank operation computes the number of 1s in the prefix $B[1,i]$, and a select operation returns the position of the $i$th 1 in $B$.
The space and time complexity of a FID is as follows.
\begin{property} 
\label{pro:fid}
$\fid$ consumes $n + o(n)$ bits and performs constant-time rank and select operations.
\end{property}
%

%%%%%%%%%%%%%%%%%%%%%%%%%%%%%%%%%%%%%%%%%%%%%%%%%%%%%%%%%%%%%%%%
% Packed ADFA
%%%%%%%%%%%%%%%%%%%%%%%%%%%%%%%%%%%%%%%%%%%%%%%%%%%%%%%%%%%%%%%%
\section{Packed ADFAs (PADFA)} \label{se:PADFA}

A PADFA is an implementation of an ADFA that accelerates pattern searching using packed strings.
We begin by defining a PADFA and explaining its pattern-searching process. 
Then, we theoretically analyze the time and space complexities of PADFAs.

%%%%%%%%%%%%%%%%%%%%%%%%%%%%%%%%%%%%%%%%
% Definition
%%%%%%%%%%%%%%%%%%%%%%%%%%%%%%%%%%%%%%%%
\subsection{The definition of PADFAs}

%
% Example of PADFA
%
\begin{figure}[t]
\begin{minipage}{0.4\columnwidth}
\centering
\includegraphics[width=0.55\textwidth]{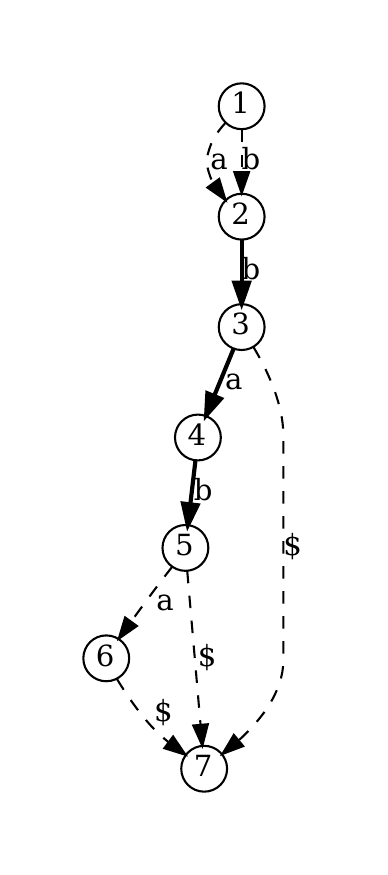} 
\end{minipage}
\begin{minipage}{0.6\columnwidth}
\begin{align*}
\calS &= \{ \rm ab\$, abab\$, ababa\$, bb\$, bbab\$, bbaba\$ \} \\
T &= \mathrm{\#bab\#\#\#}, B = \mathsf{1010110} \\
\Sigma_5 &= \{ a, \$\}, D_5 = (6, 7)
\end{align*}
\caption{The minADFA $\DFA$ accepting $\calS$ (left), and $\calS, T, B, \Sigma_5$ and $D_5$ of the PADFA $\PADFA$ representing $\DFA$ (top-right).
$\PADFA$ has six biased search trees corresponding to the vertices $1, 3, 5$, and $6$.
The vertex $5$ has two light edges $(5, 6, \rm a)$ and $(5, 7, \$)$.
}
\end{minipage}
\label{fig:PADFA}
\end{figure}

%
% notation
%
Consider a dictionary $\calS$ of $k$ distinct strings, where each string in $\calS$ is assumed to end with a special symbol $\$$.
Given an ADFA $\DFA$ accepting $\calS$ with $n$ states, let $\heavyedges$ and $\lightedges$ be the set of heavy and light edges obtained by applying SymCPD to $\DFA$.
Thus, $\DFA$ has at most $k$ sinks, denoted by $W \subseteq V$.
Here, we represent each vertex in $V$ as an integer in $\{ 1, \dots, n\}$
such that $r = 1$ and $v = u + 1$ for any $(u, v, c) \in \heavyedges$.
Such a vertex ordering always exists since $H$ forms disjoint paths.
For any $v \in V$, let $L_v$ denote the light edges out-going from $v$.
Given $L_v$, define $D_v \triangleq \{u \mid (v, u, c) \in L_v\}$ as its distination list, 
and $\Sigma_v \triangleq \{c \mid (v, u, c) \in L_v\}$ as its label list.
Additionally, let $B$ be the binary string of length $n$ such that $B[i] = 1$ iff $L_v \neq \emptyset$, and let $n_L \triangleq \sum_{v \in V} B[v]$.
In other words, $L$ is factorized into $n_L$ disjoint edgesets $\{L_v \mid B[v] = 1\}$.

%
% definition of PDFA
%
A PDFA represents $H$ as a single packed string and $L$ as an array of BSTs with some ancillary data structures. 
Let $T$ be a packed string such that $T[v] = c$ if $(v, v+1, c) \in H$ and $\#$ otherwise; in other words, all heavy edges are packed together in $T$.
Define $w_v : \Sigma \rightarrow \mathbb{N}_+$ such that $w_v(c) = \pi(u, W)$ if $(v, u, c) \in L_v$ and $w_v(c) = 0$ otherwise.
Additionally, let $\bst{v} \triangleq \bst{\Sigma_v, w_v}$ be a BST with universe $\Sigma$.
Then, $L_v$ is represented by a pair of $\bst{v}$ and $D_v$ because the destination of the light edge labeled by $c$ in $L_v$ is stored in $D_v[i]$, where $i = {\rm access}(\bst{v}, c)$.
Define $\bstary \triangleq \{ \bst{v} \mid B[v] = 1\}$ and $D \triangleq \{D_v \mid B[v] = 1\}$.
Also, let $\fid$ be a FID of $B$.
Thus, 
In summary, the PADFA $\PADFA$ of the given $\DFA$ is the quadruplet of the packed string $T$, the BSTs $\bstary$, the destination lists $D$, and the FID $\fid$.
Figure 3 gives an example of PADFAs.
% ここだけ直接参照してるので注意

%
% pattern searching on PADFA
%
\cref{alg:PADFA-search} presents a pattern searching process on $\PADFA$ with a query string $P$. 
In each iteration, $\ell - 1$ indicates the number of characters already matched.
First, the algorithm computes $l = \lcp{T[v..|T|], P[\ell..m]}$ using the packed technique, where $l$ represents the number of characters that can be traversed along the heavy edges. 
After traversing along the heavy edges as long as possible, it checks whether $v$ has a light edge $(v, u, P[\ell])$ using $B$ and $p = {\rm access}(\bst{v}, P[\ell])$.
If such a light edge exists, it retrieves $u$ from $D_v[p]$. 
The process repeats until a termination condition is met.

%
% algorithm
%
\begin{algorithm}[t]
    \caption{Determining whether $P \in \calS$ in $\PADFA$ accepting $\calS$}
    \label{alg:PADFA-search}
    \begin{algorithmic}[1]
        \State $v \gets 1$, $\ell \leftarrow 1$
        \While{$\ell \leq m$}
            \State $l \gets \lcp{T[v..|T|], P[i..m]}$
            \State $v \gets v + l$,  $\ell \gets \ell + l$ \label{line:move_heavyedge}
            \Comment{move along $\ell$ heavy edges}
            \State {\bf break} {\bf if} $\ell > m$
            \State \Return {\bf false} if $B[v] = 0$
            \State $p \gets {\rm access}(\bst{v}, P[\ell])$ \label{line:transfn}
            \State \Return {\bf false} {\bf if} $p = {\rm NULL}$.
            \State $v \gets D_v[p]$, $\ell \gets \ell + 1$ \label{line:move_lightedge}
            \Comment{move along one light edge}
        \EndWhile
        \State \Return {\bf true}
        % \State \Return $(T[v] = \#) \land (B[v] = 0)$
        % \Comment{return true iff $v_i$ is a leaf}
    \end{algorithmic}
\end{algorithm}

%%%%%%%%%%%%%%%%%%%%%%%%%%%%%%%%%%%%%%%%
% Theoretical analysis
%%%%%%%%%%%%%%%%%%%%%%%%%%%%%%%%%%%%%%%%
\subsection{The time and space complexities of PADFAs}

%
% PADFA time complexity
%
Firstly, we discuss the time complexity of \cref{alg:PADFA-search}.
Define $I$ be the number of iterations of the algorithm.
Let $v_i$, $\ell_i$, $l_i$, and $p_i$ be the value of $v$, $\ell$, $l$ and $p$ after \cref{line:move_heavyedge} of the $i$th iteration for any $1 \leq i \leq I$.
Additionaly, define $c_i = P[\ell_i]$ and $u_i = D_{v_i}(p_i)$.
Then, the upper bound of $I$ is given as follows.
%]
\begin{lemma} \label{lem:PADFA:iterations}
The number of iterations $I$ in \cref{alg:PADFA-search} is at most $1 + 2 \lfloor \log_2 k \rfloor$.
\end{lemma}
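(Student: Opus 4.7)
The plan is to count the while-loop iterations of \cref{alg:PADFA-search} by tallying the light edges of $\DFA$ traversed during the search, and then bound that tally via \cref{prop:SymCPD_lightedge}.

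First, I would dissect a single iteration. On \cref{line:move_heavyedge}, the algorithm advances along $l_i \geq 0$ consecutive heavy edges via a single packed LCP computation. After that, the iteration either breaks (when $\ell > m$), returns \textbf{false} (when $B[v] = 0$ or when the access on \cref{line:transfn} returns NULL), or advances along exactly one light edge on \cref{line:move_lightedge}. In particular, every iteration except possibly the last one concludes by crossing exactly one light edge. Moreover, because the algorithm only takes transitions of $\DFA$ (heavy edges through $T$ and light edges through $\bst{v}$ and $D_v$), the whole sequence of visited vertices traces a single directed path in $\DFA$ that starts at the source $r$. Hence, if $I$ denotes the number of iterations, the path traced in $\DFA$ contains at least $I-1$ light edges.

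Next, I would invoke \cref{prop:SymCPD_lightedge}, which bounds the number of light edges on any directed path in $\DFA$ in terms of $\pi(r, W)$. Since every string in $\calS$ ends with $\$$, every accepting state of $\DFA$ is a sink, and each of the $k = |\calS|$ accepted strings corresponds to a distinct directed path from $r$ to some sink in $W$; therefore $\pi(r, W) = k$. Plugging this into the property and chaining with the inequality $I - 1 \leq \#\{\text{light edges on the traced path}\}$ yields $I \leq 1 + 2\lfloor \log_2 k \rfloor$, as required.

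The only point requiring care is the per-iteration case analysis verifying that every non-final iteration consumes exactly one light edge and that the visited vertices form a single directed path in $\DFA$; both facts are immediate from the control flow of \cref{alg:PADFA-search} and from $\transfn$ being the transition function of $\DFA$. I do not anticipate a real obstacle: the lemma is essentially a direct consequence of \cref{prop:SymCPD_lightedge} once the per-iteration structure is made explicit.
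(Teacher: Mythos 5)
Your proposal is correct and follows essentially the same route as the paper's proof: bound the iteration count by (number of light edges on the traversed path) plus one, note $\pi(r,W)=k$, and apply \cref{prop:SymCPD_lightedge}. The only difference is that you spell out the per-iteration case analysis and the justification of $\pi(r,W)=k$, which the paper states without elaboration.
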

\begin{proof}
The algorithm alternates between traversing several consecutive heavy edges and a single light edge, so the number of iterations is, at most, the number of light edges in one path plus one.
Since $\pi(r, W) = k$ stands, any path in $\DFA$ contains at most $2 \lfloor \log_2 k \rfloor$ light edges by \cref{prop:SymCPD_lightedge}. 
Consequently, $I \leq 1 + 2 \lfloor \log_2 k \rfloor$.
\qed
\end{proof}
Next, we introduce two lemmas for traversing heavy edges and light edges.
\begin{lemma} \label{lem:PADFA:traversing_heavy}
The total time complexity of traversing heavy edges in \cref{alg:PADFA-search} is at most $\Oh{m / \alpha + \log k}$.
\end{lemma}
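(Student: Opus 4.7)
The plan is to charge the heavy-edge work of each iteration to one packed-LCP call and then sum over the bounded number of iterations. In iteration $i$, the only heavy-edge work performed before the light-edge branch is the LCP computation $l_i=\lcp{T[v_i..|T|],\,P[\ell_i..m]}$ on line~3 together with the constant-time index updates on line~4, so it suffices to bound the cost of this single LCP call per iteration.

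Since both $T$ and $P$ are stored as packed strings, I would invoke the word RAM primitive stated in the preliminaries: the LCP is computed by word-wise comparisons in $\Oh{1+l_i/\alpha}$ time, where the $\Oh{1}$ term accounts for the single machine word containing the mismatching character. Summing across the $I$ iterations therefore gives a total cost of
\[
\sum_{i=1}^{I}\Oh{1+l_i/\alpha} \;=\; \Oh{\,I + \tfrac{1}{\alpha}\sum_{i=1}^{I} l_i\,}.
\]

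Two bounds then close the argument. First, \cref{lem:PADFA:iterations} yields $I \le 1+2\lfloor\log_2 k\rfloor = \Oh{\log k}$. Second, on line~4 of iteration $i$ the counter $\ell$ advances by exactly $l_i$ and the loop terminates as soon as $\ell>m$, so a telescoping argument gives $\sum_{i=1}^{I}l_i \le m$. Substituting both bounds yields the claimed $\Oh{m/\alpha + \log k}$.

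The main subtlety I anticipate is ensuring that the packed word-wise scan on $T$ is well-defined at vertices with no outgoing heavy edge and that the additive $\Oh{1}$ per call is airtight. This is handled by the design choice that such positions of $T$ carry the sentinel $\#\notin\Sigma$, so the scan always halts on an intrinsic mismatch within a single word and never overruns $T$; beyond this observation the bound is a direct application of Property from the preliminaries and requires no further machinery.
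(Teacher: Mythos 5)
Your proposal is correct and follows essentially the same argument as the paper: bound each iteration's LCP call by $\Oh{1 + l_i/\alpha}$, sum over the $I \le 1 + 2\lfloor\log_2 k\rfloor$ iterations from \cref{lem:PADFA:iterations}, and use $\sum_i l_i \le m$ to obtain $\Oh{m/\alpha + \log k}$. The added remark about the sentinel $\#$ keeping the packed scan well-defined is a fine (if not strictly necessary) touch, but does not change the route.
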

\begin{proof}
Since the algorithm traverses at most $m$ heavy edges, it follows that $\sum_{i=1}^I l_i \leq m$ stands.
Given that the time complexity of $\lcp{S_1, S_2}$ is $\Oh{\left\lceil \lcp{S_1, S_2} / \alpha \right\rceil}$, the total complexity obtaining all $l_i$ is $\Oh{I + m / \alpha}$, as derived from the following equation.
\begin{align*}
\sum_{i=1}^I \left\lceil \frac{l_i}{\alpha} \right\rceil 
\leq I + \sum_{i=1}^I \frac{l_x}{\alpha}
= I + \frac{m}{\alpha}
\end{align*}
Since traversing $l_i$ heavy edges can be done in constant time through a simple addition, as shown in \cref{line:move_heavyedge}, the total time complexity for traversing heavy edges is $\Oh{m / \alpha + \log k}$.
\qed
\end{proof}
\begin{lemma} \label{lem:PADFA:traversing_light}
The total time complexity of traversing light edges in \cref{alg:PADFA-search} is at most  $\Oh{\log k}$.
\end{lemma}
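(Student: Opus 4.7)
The plan is to amortize each BST access in iteration $i$ against a drop in $\log \pi(\cdot, W)$ from $v_i$ to $u_i$, and then telescope the resulting sum along the walk; together with $I = O(\log k)$ from \cref{lem:PADFA:iterations}, this will give the claimed $O(\log k)$ bound.

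First, I would translate the per-access cost into the language of $\pi$. By the construction of $\bst{v_i}$, we have $w_{v_i}(c_i) = \pi(u_i, W)$ for the light edge $(v_i, u_i, c_i)$ taken in iteration $i$, and $w_{v_i}(\Sigma_{v_i}) = \sum_{(v_i, u, c) \in L_{v_i}} \pi(u, W) \le \pi(v_i, W)$. Hence by \cref{pro:bst_time}, a successful access in iteration $i$ costs
$$O\!\left(1 + \log \frac{\pi(v_i, W)}{\pi(u_i, W)}\right),$$
and the subsequent move via $D_{v_i}[p_i]$ at \cref{line:move_lightedge} is $O(1)$. There is at most one failed access, which terminates the algorithm; by \cref{pro:bst_time} it costs $O(\log w_{v_i}(\Sigma_{v_i})) = O(\log \pi(v_i, W)) = O(\log k)$.

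For the telescoping step I would exploit the SymCPD property that every heavy edge $(u, v, c) \in \heavyedges$ satisfies $\lambda(u) = \lambda(v)$, hence $\lfloor \log_2 \pi(u, W) \rfloor = \lfloor \log_2 \pi(v, W) \rfloor$. Because $v_{i+1}$ is reached from $u_i$ along a chain of heavy edges, chaining this equality gives $\pi(v_{i+1}, W) \le 2\pi(u_i, W)$, i.e.\ $-\log \pi(u_i, W) \le 1 - \log \pi(v_{i+1}, W)$. Substituting into the sum collapses it:
$$\sum_{i} \left[\log \pi(v_i, W) - \log \pi(u_i, W)\right] \le I + \log \pi(v_1, W) - \log \pi(v_{I}, W) \le I + \log k,$$
using $\pi(v_1, W) = \pi(r, W) = k$ and $\pi(v_I, W) \ge 1$. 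Adding the $O(I)$ contribution of the constant term inside each access, the $O(\log k)$ for the possible failed access, and the $O(1)$ per iteration for the light-edge move itself, the total is $O(I + \log k) = O(\log k)$ by \cref{lem:PADFA:iterations}.

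The main obstacle is recognizing that the SymCPD property on a single heavy edge propagates along a whole chain to yield the factor-$2$ relation between $\pi(u_i, W)$ and $\pi(v_{i+1}, W)$; once that is in hand, the $-\log \pi(u_i, W)$ term arising from the access cost in iteration $i$ can be paired with the $+\log \pi(v_{i+1}, W)$ term from iteration $i+1$ at a cost of only one additive unit per pair, which is precisely what makes the telescope close down to $\log k$.
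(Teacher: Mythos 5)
Your proof is correct and follows essentially the same argument as the paper: convert each successful access cost into $\Oh{1 + \log\left(\pi(v_i, W)/\pi(u_i, W)\right)}$ using the BST weights, telescope along the traversal, handle the single possible failed access separately, and conclude with \cref{lem:PADFA:iterations}. The only (harmless) deviation is that you derive $\pi(v_{i+1}, W) \le 2\pi(u_i, W)$ by chaining the SymCPD $\lambda$-invariance along heavy edges, whereas the paper uses the simpler fact that $v_{i+1}$ is a descendant of $u_i$, so $\pi(v_{i+1}, W) \le \pi(u_i, W)$ holds directly; likewise your claim $\pi(v_1, W) = k$ should be $\pi(v_1, W) \le k$, which is all that is needed.
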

\begin{proof}
Because traversing a light edge requires an access operation ${\rm access}(\bst{v_i}, c_i$, the total cost for traversing light edges reaches its maximum when $I$ access operations are executed, with the final access operation returning NULL.
Since $v_{i+1}$ must be a descendants of $u_i$, $\pi(v_{i+1}, W) \leq \pi(u_i, W)$ holds for all $i$.
First, a look-up time of $\bst{v_i}$ from $\bstary$ is constant because the FID $\mathcal{D}$ of $B$ provides a constant-time select operation. 
According to \cref{pro:bst_time}, an access operation ${\rm access}(\bst{v_i}, c_i)$ requires $\Oh{1 + \log \left( w_{v_i}(\Sigma) / w_{v_i}(c_i) \right) } = \Oh{1 + \log \left( \pi({v_i}, W) / \pi(u_i, W) \right) }$ if $p_i$ is not NULL, and $\Oh{\log \pi(v_i, W)}$ otherwise.
Then, the total cost of $I$ access operations is computed as follows.
\begin{align*}
\sum_{i = 1}^{I-1} \left(1 + \log \frac{\pi(v_{i-1}, W)}{\pi(u_{i-1}, W)}\right) + \log \pi(v_m, W)
= I - 1 + \log \frac{\prod_{i = 1}^I \pi(v_i, W)}{\prod_{i=1}^{I-1} \pi(u_i, W)} \\
\leq I + \log \frac{\prod_{i = 1}^I \pi(v_i, W)}{\prod_{i=2}^{I} \pi(v_i, W)} 
= I + \log \pi(v_1, W)
\end{align*}
By combining facts that $\pi(v_1, W) \leq k$ and $I \leq 1 + 2 \lfloor \log_2 k \rfloor$, the total access time is at most $\Oh{\log k}$.
Lastly, the traversing time of a light edge is constant because its destination $u_i$ is obtained by accessing $D_v[p_i]$.
Consequently, the total time complexity of traversing light edges is at most $\Oh{\log k}$.
\qed
\end{proof}
From the two lemmas above, we immediately obtain the following theorem.
\MainTheoremTime*
\noindent
The above theorem shows $\PADFA$ demonstrates pattern searching in near-optimal time, with an overhead of $\Oh{\log k}$.
Assuming $m \in \Om{\alpha \log k}$, we obtain the following corollary.
\TimeOptimalSearch*

% 
% PADFA space complexity
% 
\newcommand{\V}[2]{V_{{\rm #1}, #2}}
Second, we discuss the space complexity of the PADFA for a minADFA.
Let $\DFA$ denote the minADFA accepting $\calS$, and let $\PADFA$ denote the PADFA of $\DFA$.
First, we analyze the number of light edges $L$ in $\PADFA$.
Define $\V{o}{1} \triangleq \{v \in V \mid \odeg{v} = 1\}$ and $\V{o}{2} \triangleq \{v \in V \mid \odeg{v} \geq 2\}$.
In the same manner, we introduce $\V{i}{1}$ and $\V{i}{2}$.
\begin{lemma} \label{lem:PADFA:bound_deg}
For any minADFA, $\odeg{\V{o}{2}}$ and $\ideg{\V{i}{2}}$ are both upper-bounded by $2k$.
\end{lemma}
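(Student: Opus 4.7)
The plan is to prove the sharper identity $\sum_v \max(\odeg{v}-1,\,0) = \sum_v \max(\ideg{v}-1,\,0) \le k-1$ and then convert it into the two stated bounds. My starting point is the trie $T$ of $\calS$: as a rooted tree with $n_T$ vertices and $k$ leaves (one per string, each ending in $\$$), counting its edges two ways yields $n_T - 1 = \sum_v \max(\odeg{v}-1,0) + (n_T - k)$, so $T$ satisfies $\sum_v \max(\odeg{v}-1,0) = k-1$ exactly.

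Next I would argue that passing from $T$ to the minADFA $\DFA$ cannot inflate this quantity. The minADFA is obtained by identifying trie vertices whose sub-tries are isomorphic, which induces a surjection $\phi : V_T \to V_\DFA$; within any fibre $\phi^{-1}(v)$ all trie vertices share the out-degree of $v$ because isomorphic sub-tries have identical outgoing labels. Therefore
\[
\sum_{t \in V_T} \max(\odeg{t}-1,0) \;=\; \sum_{v \in V_\DFA} |\phi^{-1}(v)|\,\max(\odeg{v}-1,0) \;\ge\; \sum_{v \in V_\DFA} \max(\odeg{v}-1,0),
\]
so $\DFA$ inherits $\sum_v \max(\odeg{v}-1,0) \le k-1$. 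For the in-degree side, I would invoke the fact (recalled in the preliminaries) that the minADFA has a unique source $r$ and a unique sink $w$, which means exactly one vertex has $\odeg = 0$ and exactly one has $\ideg = 0$. Counting $|E|$ in two ways then gives
\[
|E| \;=\; \sum_v \max(\odeg{v}-1,0) + (n-1) \;=\; \sum_v \max(\ideg{v}-1,0) + (n-1),
\]
which forces $\sum_v \max(\ideg{v}-1,0) = \sum_v \max(\odeg{v}-1,0) \le k-1$.

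Finally I would convert these excess-degree bounds into the required statements. Since every $v \in \V{o}{2}$ contributes $\odeg{v}-1 \ge 1$, we have $|\V{o}{2}| \le \sum_v \max(\odeg{v}-1,0) \le k-1$, whence
\[
\odeg{\V{o}{2}} \;=\; \sum_{v \in \V{o}{2}}(\odeg{v}-1) + |\V{o}{2}| \;=\; \sum_v \max(\odeg{v}-1,0) + |\V{o}{2}| \;\le\; 2(k-1) \;<\; 2k,
\]
and the identical computation with in-degrees in place of out-degrees gives $\ideg{\V{i}{2}} < 2k$.

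The step demanding the most care is the merging argument: I must justify that vertices identified by the minADFA construction genuinely share out-degree with their fibre representative, so the trie-to-minADFA passage only deletes terms from $\sum_v \max(\odeg{v}-1,0)$ and never increases any surviving term. Everything else in the proof is a routine counting exercise based on the handshake identity and the uniqueness of the source and sink.
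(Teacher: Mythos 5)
Your proof is correct, and while its out-degree half follows the paper's route, its in-degree half does not. For out-degrees, both you and the paper count inside the trie (the paper's identity $d - n_2 = k-1$ together with $d \ge 2n_2$ is exactly your excess-degree identity $\sum_v \max(\odeg{v}-1,0)=k-1$ in disguise) and then transfer the bound to the minADFA via the merging of isomorphic subtrees; your fibre argument makes that transfer explicit, where the paper merely asserts it. For in-degrees the paper only says the claim follows ``in the same manner,'' which cannot be read literally: a trie has no vertex of in-degree at least two, so mimicking the argument would require a dual counting structure (e.g., on the reverse trie). You sidestep this entirely by double-counting $|E|$ in the minADFA itself and using the uniqueness of the source and of the sink --- both available in the paper's setting, since the start state is the unique source and the single accepting state is the unique sink once every string ends with $\$$ and redundant states are removed --- to get $\sum_v \max(\ideg{v}-1,0)=\sum_v \max(\odeg{v}-1,0)\le k-1$, hence $\ideg{\V{i}{2}}\le 2(k-1)<2k$. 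This is a genuinely different and, for the in-degree statement, more rigorous route than the paper's; its only extra dependency is the unique-sink property, which is exactly the structural fact the paper records for minADFAs, so nothing is lost in generality.
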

\begin{proof}
For simplicity, define $n_1 = |\V{o}{1}|$, $n_2 = |\V{o}{2}|$, and $d = \odeg{ \V{o}{2}}$.
Assume a given ADFA $\DFA$ forms a trie, meaning $n = |E| + 1$ and $|W| = k$ hold.
By definition, $n = k + n_1 + n_2$ and $|E| = n_1 + d$ hold.
By combining these facts, we derive $d - n_2 = k - 1$.
Using the fact that $d \geq 2n_2$, $n_2 \leq k - 1$ holds. 
Now, assume $d \geq 2k$.
Then, $n_2 \geq k + 1$ would hold, which contradicts the earlier fact that $n_2 \leq k - 1$.
Therefore, $d < 2k$.
Since a minADFA is obtained by merging some isomorphic subtrees of the trie,
we conclude that $\odeg{\V{o}{\geq 2}} < 2k$ also holds for the minADFA $\DFA$.
Similarly, $\ideg{\V{i}{\geq 2}} < 2k$ can also be proven in the same manner.
\qed
\end{proof}
From the above lemma, we derive the following upper bound on $|L|$.
\begin{lemma} \label{lem:PADFA:bound_light}
For any minADFA, the number of light edges $|L|$ is $\Oh{k}$.
\end{lemma}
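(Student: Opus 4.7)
The plan is to show that every light edge is incident at its tail to a vertex of out-degree at least $2$, or at its head to a vertex of in-degree at least $2$, and then to bound the total number of such edges via \cref{lem:PADFA:bound_deg}.

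The key observation is how $\lambda(\cdot)$ behaves across an edge $(u, v, c)$ when the endpoints have ``trivial'' degree. In an ADFA, if $\odeg{u} = 1$, then every directed path from $u$ to a sink must use this unique out-edge, and hence $\pi(u, W) = \pi(v, W)$. Symmetrically, if $\ideg{v} = 1$, then every directed path from $\startstate$ to $v$ must arrive via $u$, so $\pi(\startstate, u) = \pi(\startstate, v)$. Consequently, whenever $u \in \V{o}{1}$ and $v \in \V{i}{1}$ simultaneously, both coordinates of $\lambda$ agree, i.e.\ $\lambda(u) = \lambda(v)$, and the edge is heavy. Taking the contrapositive: every light edge $(u, v, c) \in L$ satisfies $u \in \V{o}{2}$ or $v \in \V{i}{2}$.

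From this inclusion I would bound $|L|$ by splitting it into the edges of the first kind and those of the second kind. The first class has size at most $\odeg{\V{o}{2}}$ (since each such edge is accounted for as an out-edge of a vertex in $\V{o}{2}$), and the second class has size at most $\ideg{\V{i}{2}}$. Invoking \cref{lem:PADFA:bound_deg} then gives $|L| \leq \odeg{\V{o}{2}} + \ideg{\V{i}{2}} \leq 2k + 2k = \Oh{k}$.

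The proof is short and the only step requiring any care is the key observation: one must verify that unit out-degree really forces the second coordinate of $\lambda$ to be preserved (and symmetrically for in-degree), which relies on $\DFA$ being an ADFA with unique source $\startstate$ and with the sinks $W$ being the only places paths can terminate. This is the only potential obstacle; once established, the counting step is immediate from the previous lemma.
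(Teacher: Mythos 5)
Your proof is correct and follows essentially the same route as the paper: both reduce to showing that a light edge must have a branching tail (out-degree $\geq 2$) or a merging head (in-degree $\geq 2$), and then apply \cref{lem:PADFA:bound_deg} to get $|L| \leq \odeg{\V{o}{2}} + \ideg{\V{i}{2}} \leq 4k$. The only difference is that you explicitly verify, via the invariance of both coordinates of $\lambda$, the claim that the paper simply asserts ``by the definition of SymCPD,'' which is a welcome addition of detail rather than a different argument.
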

\begin{proof}
By the definition of SymCPD, any edge that does not share its starting or ending points with any other edge must be categorized as a heavy edge.
Thus, $|L|$ is bounded by the number of edges that share at least its starting or ending vertex with others.
Consequently, $|L| \leq \odeg{\V{out}{\geq 2}} + \ideg{\V{in}{\geq 2}} \leq 4k$ holds by \cref{lem:PADFA:bound_deg}.
\qed
\end{proof}
Finally, we obtain the following theorem.
\MainTheoremSpace*
\begin{proof}
The PADFA $\PADFA$ consists of four data structures:
the packed string $T$, the BSTs $\bstary$, the destination lists $D$, and the FID $\fid$.
The packed string $T$ consumes $n \lceil \log_2 \sigma \rceil$ bits.
The BSTs $\bstary$ consumes $\Oh{k \log \sigma}$ bits because a single BST $\bst{v}$ consumes $\Oh{|\Sigma_v| \log \sigma }$ bits by \cref{pro:bst_space}, and $\sum_{v} |\Sigma_v| = |L|$, which is upper bounded by $\Oh{k}$.
The destination lists $D$ stores all destinations of $L$ and requires $\Oh{k \log n}$ bits.
Accoding to \cref{pro:fid}, the FID $\fid$ consumes $n + o(n)$ bits.
By summing up all space complexities, the theorem follows.
\qed
\end{proof}
If $k$ is relatively small compared to $n$, the PADFA $\PADFA$ can be represented in a more compact space, as shown below.
\SpaceWhenKisSmall*
\noindent
This corollary demonstrates the advantage of PADFA over trie.
The information-theoretic lower bound of a trie of $n'$ vertices is $n' (2 + \log_2 \sigma)$ bits, which is greater that $n' (1 + \lceil \log_2 \sigma \rceil)$ bits.
Since $n$ is typically smaller than $n'$, $\PADFA$ consumes less memory than the trie, even though $\PADFA$ forms a DAG.
Furthermore, the FID $\fid$ representing $B$ can be compressed using the zeroth-order empirical entropy of each string while still supporting constant-time operations~\cite{DBLP:journals/talg/RamanRS07}.
Since the number of 1s in $B$ is $\Oh{k}$, this representation achieves significant compression when $k$ is sufficiently small with respect to $n$.
By applying this technique, the space complexity of $\PADFA$ can be regarded as $n \lceil \log_2 \sigma \rceil + \oh{n}$ bits.

%
% Packed DAWG
%
Lastly, we show an application of our PADFA for substring pattern matching, determining whether a pattern string $P$ of length $m$ occurs in a text string $T$ of length $n$.
A DAWG of $T$ is an ADFA representing all $n+1$ suffixes of $T$ and consumes $\Oh{n}$ vertices and edges.
A packed DAWG can be obtained in the same manner as general ADFAs and derive the following corollary by directly adapting \cref{thm:PADFA:Time} and \ref{thm:PADFA:Space} to the DAWGs.
\PackedDAWG*
\noindent
In addition, according to \cref{cor:optimal_time}, the packed DAWG achieves the optimal time complexity $\Oh{m / \alpha}$ when $m$ is sufficiently long.

%%%%%%%%%%%%%%%%%%%%%%%%%%%%%%%%%%%%%%%%%%%%%%%%%%%%%%%%%%%%%%%%%%%%%%%%%%%%%%%%
% Experients
%%%%%%%%%%%%%%%%%%%%%%%%%%%%%%%%%%%%%%%%%%%%%%%%%%%%%%%%%%%%%%%%%%%%%%%%%%%%%%%%
\section{Experiments}

We implemented various ADFAs and applied them to multiple real-world datasets to demonstrate that PADFAs achieve better space and time efficiency compared to ADFAs that do not utilize packed strings.
We begin by describing our experimental settings, followed by a presentation of the experimental results.

%%%%%%%%%%%%%%%%%%%%%%%%%%%%%%%%%%%%%%%%
% Experimental settings
%%%%%%%%%%%%%%%%%%%%%%%%%%%%%%%%%%%%%%%%
\subsection{Experimental settings}
\newcommand{\Atrie}{\mathcal{A}_{\rm trie}}
\newcommand{\Apref}{\mathbf{A}_{\rm pref}}
\newcommand{\Apath}{\mathbf{A}_{\rm path}}
\newcommand{\Amini}{\mathcal{A}_{\rm min}}
\newcommand{\Aours}{\mathbf{A}_{\rm min}}

\begin{table}[t]
    \centering
    \caption{The characteristics for each dataset and the size of tries and ADFA.}
    \begin{tabular}{l rrrr rr rr}
        \toprule
        & \multicolumn{4}{c}{dictionary} & \multicolumn{2}{c}{Trie $\Atrie$} & \multicolumn{2}{c}{PADFA $\Aours$} \\
        \cmidrule(lr){2-5}
        \cmidrule(lr){6-7}
        \cmidrule(lr){8-9}
        & $\sigma$ & $k$ & total len.  & ave. len. & $|V|$  & $|E|$  & $|V|$  & $|E|$ \\
        \midrule
        \dataset{url} & 93 & 862,665 & 72,540,387 & 84.089 & 10,146,553 & 10,146,552 & 1,612,336 & 2,040,555  \\
        \dataset{city} & 78 & 177,030 & 1,970,082 & 11.183 & 846,550 & 846,549 & 198,195 & 333,800  \\
        \dataset{prot} & 25 & 157,237 & 46,687,247 & 295.046 & 35,028,185 & 35,028,184 & 32,905,500 & 33,030,196 \\
        \bottomrule
    \end{tabular}
    \label{tab:text_characteristics}
\end{table}
%

%
% dataset
%
We used three real-world datasets, \dataset{url}, \dataset{city}, and \dataset{prot} as input dictionaries.
\dataset{url} consits of URLs from a crawl of the \texttt{.eu} domain condacted in 2005~\cite{weblab,BCSU3}.
\dataset{city} is a list of cities with a population of 500 or more, dumped by GeoNames~\cite{geonames}.
\dataset{prot} contains the first 50 MiB of protein sequences downloaded from the Pizza\&Chilli Corpus~\cite{pizzachilli}.
In all dictionaries, each character was represented using one byte (8 bits).
\cref{tab:text_characteristics} summarizes characteristics of these dictionaries.

We implemented five types of ADFAs: $\Atrie$, $\Apref$, $\Apath$, $\Amini$ and $\Aours$.
$\Atrie$ is a simple trie.
$\Apref$ is a \emph{minimal prefix trie}~\cite{DBLP:journals/tse/Aoe89a} that stores only the minimal prefixes needed to identify each string and represents the remaining suffixes as packed strings.
$\Apath$ is a \emph{path-decomposed trie}~\cite{DBLP:conf/pods/FerraginaGGSV08} that stores heavy paths of $\Atrie$ as packed strings, which is essentially equivalent to the PADFA for a simple trie.
$\Amini$ is the minADFA obtained from $\Atrie$.
$\Aours$ is our PADFA for $\Amini$.
Consequently, $\Atrie$ and $\Amini$ do not use packed strings, while $\Apref$, $\Apath$, and $\Aours$ utilize packed strings.

In the experiments, some data structures described in \cref{se:PADFA} were replaced with more practical alternatives to improve the practical performance of ADFAs.
First, we use \emph{two-stage heavy path decompositioin} (two-stage HPD)~ \cite{DBLP:journals/siamcomp/BilleLRSSW15} instead of SymCPD.
The two-stage HPD also decomposes an edge set into heavy and light edges.
Technically, the sets $H$ and $L$ obtained by the two-stage HPD are slightly different from those obtained by the SymCPD, and the set $L$ in the two-stage HPD is smaller compared to that in the SymPCD. 
However, all theoretical results presented in our papers remain valid when employing the two-stage HPD.
Thus, we used the two-stage HPD in the experiments.
Secondly, we implemented all branches of ADFAs as a simple edge list and accessed them by a simple binary search instead of BSTs.
This change was made because a BST empirically consumes more space and time than a simple edge list.

We first constructed the five types of ADFAs for three input dictionaries and measured their memory consumption.
Next, for each ADFA, we performed pattern searching using all strings in the dictionary as queries and recorded the total computation times.
All programs were implemented in C++ and compiled with GCC 12.2.0 using the -O3 option\footnote{The source code is available at \url{https://github.com/shibh308/Packed_ADFA}.}.
All experiments were performed on a machine running Debian 12, equipped with an Intel(R) Xeon(R) CPU 2.20GHz processor, 32GiB of memory, and a register (word) size $\omega = 64$ bits, meaning that a word can store $\alpha=8$ characters.

%%%%%%%%%%%%%%%%%%%%%%%%%%%%%%%%%%%%%%%%
% Experimental results
%%%%%%%%%%%%%%%%%%%%%%%%%%%%%%%%%%%%%%%%
\subsection{Experimental results}

\begin{table}[t]
    \caption{The memory consumption and the computing times of ADFAs.}
    \centering
    \begin{tabular}{l rr rrr rr rrr}
        \toprule
        & \multicolumn{5}{c}{Memory [MiB]} & \multicolumn{5}{c}{Time [ms]} \\
        \cmidrule(lr){2-6}
        \cmidrule(lr){7-11}
        % &  \multicolumn{3}{c}{Trie} & \multicolumn{2}{c}{PADFA} 
        % &  \multicolumn{3}{c}{Trie} & \multicolumn{2}{c}{PADFA} \\
        % \cmidrule(lr){2-4}
        % \cmidrule(lr){5-6}
        % \cmidrule(lr){7-9}
        % \cmidrule(lr){10-11}
        & $\Atrie$ & $\Apref$ & $\Apath$ & $\Amini$ & $\Aours$
        & $\Atrie$ & $\Apref$ & $\Apath$ & $\Amini$ & $\Aours$ \\
%        & $\BaseTrie$ & $\TailTrie$ & $\PathTrie$ & $\BaseDFA$ & $\PathDFA$ \\
        \midrule
        \dataset{url} & 59.269 & 20.751 & 13.625 & 11.676 & 4.773
        & 5911.507 & 5056.601 & 1046.047 & 5691.689 & 1219.044 \\
        \dataset{city} & 4.945 & 2.045 & 1.618 & 1.910 & 1.270
        & 221.616 & 179.772 & 133.276 & 233.288 & 161.726 \\
        \dataset{prot} & 204.609 & 38.729 & 34.130 & 189.000 & 32.757
        & 2614.497 & 363.963 & 175.183 & 2780.915 & 313.974 \\
        \bottomrule
    \end{tabular}
    \label{tab:memory}
\end{table}
%

%
% Space
%
\cref{tab:memory} shows the memory consumption and computing times of each combination of ADFAs and dictionaries.
For the $\dataset{prot}$, which is composed of long strings, we can see that the $\Atrie$ and $\Amini$, which do not use packed strings, are dramatically larger in size compared to the $\Apref$, $\Apath$, and $\Aours$, which utilize packed strings.
This demonstrates that using packed strings is particularly beneficial for dictionaries composed of long strings.

%
% Time
%
The table also shows that $\Aours$ achieved the best memory efficiency and the second-best time efficiency, while $\Apath$ achieved the second-best memory efficiency and the best time efficiency across all dictionaries.
$\Aours$ and $\Apath$ can be regarded as our PADFAs for the minADFA $\Amini$ and trie $\Atrie$, respectively.
When comparing the pairs $(\Atrie, \Amini)$ and $(\Apath, \Aours)$, we observe that applying our packing technique to tries and minADFAs consistently improves both space and time efficiency.
Furthermore, these results suggest that one can manage the trade-off between time and space efficiency by selecting either a minADFA or trie as input for the PADFA.

%%%%%%%%%%%%%%%%%%%%%%%%%%%%%%%%%%%%%%%%%%%%%%%%%%%%%%%%%%%%%%%%%%%%%%%%%%%%%%%%
% Conclusion
%%%%%%%%%%%%%%%%%%%%%%%%%%%%%%%%%%%%%%%%%%%%%%%%%%%%%%%%%%%%%%%%%%%%%%%%%%%%%%%%
\section{Conclusion}

We proposed PADFA, a general framework for packing any ADFA, which empirically reduces both the time and space complexity of pattern searching.
Theoretically, we demonstrated that pattern searching in a PADFA can be performed in 
$\Oh{m / \alpha + \log k}$ time, achieving time-optimal searching for sufficiently long patterns. 
We also proved that a PADFA constructed from a minADFA consumes less space than a trie when the dictionary size is relatively smaller than the size of the minADFA.
Furthermore, we empirically show that PADFAs for both the trie and minADFA achieved the best space and time efficiency for real-world datasets.
The results also suggest that the generality of PADFA allows for controlling the trade-off between speed and memory in pattern searching.

% ---- Bibliography ----
%
% BibTeX users should specify bibliography style 'splncs04'.
% References will then be sorted and formatted in the correct style.
\newpage
\bibliographystyle{splncs04}
\bibliography{mybib.bib}

\begin{thebibliography}{10}
\providecommand{\url}[1]{\texttt{#1}}
\providecommand{\urlprefix}{URL }
\providecommand{\doi}[1]{https://doi.org/#1}

\bibitem{geonames}
Geonames dump, \url{https://download.geonames.org/export/dump/}

\bibitem{weblab}
Laboratory for web algorithmics, \url{https://law.di.unimi.it/datasets.php}

\bibitem{DBLP:journals/tse/Aoe89a}
Aoe, J.: An efficient digital search algorithm by using a double-array structure. {IEEE} Trans. Software Eng.  \textbf{15}(9),  1066--1077 (1989). \doi{10.1109/32.31365}, \url{https://doi.org/10.1109/32.31365}

\bibitem{DBLP:journals/cacm/AppelJ88}
Appel, A.W., Jacobson, G.J.: The world's fastest scrabble program. Commun. {ACM}  \textbf{31}(5),  572--578 (1988). \doi{10.1145/42411.42420}, \url{https://doi.org/10.1145/42411.42420}

\bibitem{DBLP:journals/siamcomp/BentST85}
Bent, S.W., Sleator, D.D., Tarjan, R.E.: Biased search trees. {SIAM} J. Comput.  \textbf{14}(3),  545--568 (1985). \doi{10.1137/0214041}, \url{https://doi.org/10.1137/0214041}

\bibitem{DBLP:conf/cpm/BilleGS17}
Bille, P., G{\o}rtz, I.L., Skjoldjensen, F.R.: Deterministic indexing for packed strings. In: K{\"{a}}rkk{\"{a}}inen, J., Radoszewski, J., Rytter, W. (eds.) 28th Annual Symposium on Combinatorial Pattern Matching, {CPM} 2017, July 4-6, 2017, Warsaw, Poland. LIPIcs, vol.~78, pp. 6:1--6:11. Schloss Dagstuhl - Leibniz-Zentrum f{\"{u}}r Informatik (2017). \doi{10.4230/LIPIcs.CPM.2017.6}, \url{https://doi.org/10.4230/LIPIcs.CPM.2017.6}

\bibitem{DBLP:journals/siamcomp/BilleLRSSW15}
Bille, P., Landau, G.M., Raman, R., Sadakane, K., Satti, S.R., Weimann, O.: Random access to grammar-compressed strings and trees. {SIAM} J. Comput.  \textbf{44}(3),  513--539 (2015). \doi{10.1137/130936889}, \url{https://doi.org/10.1137/130936889}

\bibitem{DBLP:journals/tcs/BlumerBHECS85}
Blumer, A., Blumer, J., Haussler, D., Ehrenfeucht, A., Chen, M.T., Seiferas, J.I.: The smallest automaton recognizing the subwords of a text. Theor. Comput. Sci.  \textbf{40},  31--55 (1985). \doi{10.1016/0304-3975(85)90157-4}, \url{https://doi.org/10.1016/0304-3975(85)90157-4}

\bibitem{BCSU3}
Boldi, P., Codenotti, B., Santini, M., Vigna, S.: Ubicrawler: A scalable fully distributed web crawler. Software: Practice \& Experience  \textbf{34}(8),  711--726 (2004)

\bibitem{DBLP:journals/coling/DaciukMWW00}
Daciuk, J., Mihov, S., Watson, B.W., Watson, R.E.: Incremental construction of minimal acyclic finite state automata. Comput. Linguistics  \textbf{26}(1),  3--16 (2000). \doi{10.1162/089120100561601}, \url{https://doi.org/10.1162/089120100561601}

\bibitem{DBLP:conf/pods/FerraginaGGSV08}
Ferragina, P., Grossi, R., Gupta, A., Shah, R., Vitter, J.S.: On searching compressed string collections cache-obliviously. In: Lenzerini, M., Lembo, D. (eds.) Proceedings of the Twenty-Seventh {ACM} {SIGMOD-SIGACT-SIGART} Symposium on Principles of Database Systems, {PODS} 2008, June 9-11, 2008, Vancouver, BC, Canada. pp. 181--190. {ACM} (2008). \doi{10.1145/1376916.1376943}, \url{https://doi.org/10.1145/1376916.1376943}

\bibitem{pizzachilli}
Ferragina, P., Navarro, G.: Pizza\&chili corpus, \url{https://pizzachili.dcc.uchile.cl/}

\bibitem{DBLP:journals/jcss/FredmanW93}
Fredman, M.L., Willard, D.E.: Surpassing the information theoretic bound with fusion trees. J. Comput. Syst. Sci.  \textbf{47}(3),  424--436 (1993). \doi{10.1016/0022-0000(93)90040-4}, \url{https://doi.org/10.1016/0022-0000(93)90040-4}

\bibitem{fujitaIKMF16_doublearrayCDAWG}
Fujita, Y., Ichihashi, Y., Kanda, S., Morita, K., , Fuketa, M.: Full-text search using double-array cdawg. International Journal of Future Computer and Communication  \textbf{5}(6),  237--240 (2016). \doi{10.18178/ijfcc.2016.5.6.478}, \url{https://doi.org/10.18178/ijfcc.2016.5.6.478}

\bibitem{DBLP:journals/jacm/GanardiJL21}
Ganardi, M., Jez, A., Lohrey, M.: Balancing straight-line programs. J. {ACM}  \textbf{68}(4),  27:1--27:40 (2021). \doi{10.1145/3457389}, \url{https://doi.org/10.1145/3457389}

\bibitem{DBLP:journals/jea/KandaKTMF20}
Kanda, S., K{\"{o}}ppl, D., Tabei, Y., Morita, K., Fuketa, M.: Dynamic path-decomposed tries. {ACM} J. Exp. Algorithmics  \textbf{25},  1--28 (2020). \doi{10.1145/3418033}, \url{https://doi.org/10.1145/3418033}

\bibitem{DBLP:books/aw/Knuth73}
Knuth, D.E.: The Art of Computer Programming, Volume {III:} Sorting and Searching. Addison-Wesley (1973)

\bibitem{DBLP:journals/jacm/Morrison68}
Morrison, D.R.: {PATRICIA} - practical algorithm to retrieve information coded in alphanumeric. J. {ACM}  \textbf{15}(4),  514--534 (1968). \doi{10.1145/321479.321481}, \url{https://doi.org/10.1145/321479.321481}

\bibitem{DBLP:journals/talg/NavarroS14}
Navarro, G., Sadakane, K.: Fully functional static and dynamic succinct trees. {ACM} Trans. Algorithms  \textbf{10}(3),  16:1--16:39 (2014). \doi{10.1145/2601073}, \url{https://doi.org/10.1145/2601073}

\bibitem{DBLP:journals/talg/RamanRS07}
Raman, R., Raman, V., Satti, S.R.: Succinct indexable dictionaries with applications to encoding \emph{k}-ary trees, prefix sums and multisets. {ACM} Trans. Algorithms  \textbf{3}(4), ~43 (2007). \doi{10.1145/1290672.1290680}, \url{https://doi.org/10.1145/1290672.1290680}

\bibitem{DBLP:journals/jcss/SleatorT83}
Sleator, D.D., Tarjan, R.E.: A data structure for dynamic trees. J. Comput. Syst. Sci.  \textbf{26}(3),  362--391 (1983). \doi{10.1016/0022-0000(83)90006-5}, \url{https://doi.org/10.1016/0022-0000(83)90006-5}

\bibitem{DBLP:journals/ieicet/TakagiISA17}
Takagi, T., Inenaga, S., Sadakane, K., Arimura, H.: Packed compact tries: {A} fast and efficient data structure for online string processing. {IEICE} Trans. Fundam. Electron. Commun. Comput. Sci.  \textbf{100-A}(9),  1785--1793 (2017). \doi{10.1587/transfun.E100.A.1785}, \url{https://doi.org/10.1587/transfun.E100.A.1785}

\bibitem{DBLP:journals/iandc/TsurutaKKNIBT22}
Tsuruta, K., K{\"{o}}ppl, D., Kanda, S., Nakashima, Y., Inenaga, S., Bannai, H., Takeda, M.: c-trie++: {A} dynamic trie tailored for fast prefix searches. Inf. Comput.  \textbf{285}(Part),  104794 (2022). \doi{10.1016/j.ic.2021.104794}, \url{https://doi.org/10.1016/j.ic.2021.104794}

\end{thebibliography}
\end{document}